\newcommand{\co}[2]{#2}
\DeclarePairedDelimiter\ket{\lvert}{\rangle}
\DeclarePairedDelimiterX\braket[2]{\langle}{\rangle}{#1 \delimsize\vert #2}
\theoremstyle{plain}
\newtheorem{thm}{Theorem}[section] % reset theorem numbering for each chapter
\theoremstyle{definition}
\newtheorem{defn}[thm]{Definition} % definition numbers are dependent on theorem numbers
\newtheorem{prop}[thm]{Proposition}
\newcolumntype{\expand}{}
\long\@namedef{NC@rewrite@\string\expand}{\expandafter\NC@find}
  \def\problem@arg{#1}%
  \def\problem@framed{framed}%
  \def\problem@lined{lined}%
  \def\problem@doublelined{doublelined}%
    \def\problem@hline{}%
      \def\problem@hline{\hline\hline}%
      \def\problem@hline{\hline}%
    \def\problem@tablelayout{|>{\bfseries}lX|c}%
    \def\problem@title{\multicolumn{2}{|l|}{%
        \raisebox{-\fboxsep}{\textsc{\Large #2}}%
      }}%
    \def\problem@tablelayout{>{\bfseries}lXc}%
    \def\problem@title{\multicolumn{2}{l}{%
        \raisebox{-\fboxsep}{\textsc{\Large #2}}%
      }}%
\begin{document}

\title{A polynomial size model with implicit SWAP gate counting for exact qubit reordering %\thanks{Grants or other notes
%about the article that should go on the front page should be
%placed here. General acknowledgments should be placed at the end of the article.}
}
%\subtitle{Do you have a subtitle?\\ If so, write it here}

%\titlerunning{Short form of title}        % if too long for running head

%\author{J. Mulderij         \and %ORCID: 0000-0003-2688-9808
%		K.I. Aardal			\and %ORCID: 0000-0001-5974-6219
%        I. Chiscop			\and %ORCID: 0000-0002-1249-8518
%        F. Phillipson		     %ORCID: 0000-0003-4580-7521   
%}

%\authorrunning{Short form of author list} % if too long for running head

%\institute{J. Mulderij · I. Chiscop (\Letter)· F. Phillipson \at
%              Cyber Security \& Robustness Department, TNO\\
%              The Hague, The Netherlands \\
%              \email{irina.chiscop@tno.nl}           %  \\
%           \and
%           J. Mulderij · K.I. Aardal \at
%              Faculty of Electrical Engineering, Mathematics \& Computer Science, Delft University of Technology\\
%              Delft, The Netherlands
%}

\author{J. Mulderij\footnote{Faculty of Electrical Engineering, Mathematics \& Computer Science, Delft University of Technology, Delft, The Netherlands} \footnote{Cyber Security \& Robustness Department, TNO, The Hague, The Netherlands} \and K.I. Aardal\footnotemark[1] \and I. Chiscop\footnotemark[2] \and F. Phillipson\footnotemark[2]
}
\date{email: \texttt{j.mulderij@tudelft.nl}}
% The correct dates will be entered by the editor

\maketitle

\begin{abstract}
Due to the physics behind quantum computing, quantum circuit designers must adhere to the constraints posed by the limited interaction distance of qubits. Existing circuits need therefore to be modified via the insertion of SWAP gates, which alter the qubit order by interchanging the location of two qubits' quantum states. We consider the Nearest Neighbor Compliance problem on a linear array, where the number of required SWAP gates is to be minimized. We introduce an Integer Linear Programming model of the problem of which the size scales polynomially in the number of qubits and gates. Furthermore, we solve $131$ benchmark instances to optimality using the commercial solver CPLEX. The benchmark instances are substantially larger in comparison to those evaluated with exact methods before. The largest circuits contain up to $18$ qubits or over $100$ quantum gates. This formulation also seems to be suitable for developing heuristic methods since (near) optimal solutions are discovered quickly in the search process.

%\keywords{Integer Linear Programming \and Nearest Neighbor Architectures \and NNC \and Optimization \and Quantum Circuits \and SWAP gate}
%\PACS{03.67.Lx}
%\subclass{05 \and 49 \and 81 \and 90 \and 94}
\end{abstract}

\section{Introduction}
\label{intro}
\co{quantum computing is hip and acquires a speedup over classical computers in some natural problems}
The rules that govern physical interactions in a quantum setting allow quantum computing to provide algorithms with a better complexity scaling than their classical counterparts for many naturally arising problems. Exploiting the properties of phenomena such as superposition and entanglement, one can search in a database \cite{grover_quantum_1997}, factor integers \cite{shor_algorithms_1994} or estimate a phase \cite{nielsen_quantum_2002} more efficiently than previously possible.

\co{Tech limitations}
The many advantages of quantum computing come at the price of physical limitations in circuit design. First, relevant coherence times (what is relevant depends on the technology) indicate that information on qubits is perturbed or even lost after some time due to a qubit's interaction with its environment \cite{divincenzo_physical_2000}. It is therefore, for a fixed number of qubits, desirable to do calculations with as few gates as possible. A second limitation is induced by nearest neighbor constraints, where 2-qubit quantum gates can only be used when the qubits are physically adjacent. The nearest neighbor constraints have been considered in proposals for a range of potential technological realizations of quantum computers such as ion traps \cite{amini_toward_2010,kumph_two-dimensional_2011,nickerson_topological_2013}, nitrogen-vacancy centers in diamonds \cite{nickerson_topological_2013,yao_quantum_2013}, quantum dots emitting linear cluster states linked by linear optics \cite{devitt_architectural_2009,herrera-marti_photonic_2010}, laser manipulated quantum dots in a cavity \cite{jones_layered_2012} and superconducting qubits \cite{divincenzo_multi-qubit_2013,ohliger_efficient_2012,linke_experimental_2017}. They are also considered in realizations of specific types of circuits and architectures, such as surface codes \cite{versluis_scalable_2017}, Shor's algorithm \cite{fowler_implementation_2004}, the Quantum Fourier Transform (QFT) \cite{takahashi_quantum_2007}, circuits for modular multiplication and exponentiation \cite{markov_constant-optimized_2012}, quantum adders on the 2D NTC architecture \cite{choi_$thetasqrtn$-depth_2012}, factoring \cite{pham_2d_2012}, fault-tolerant circuits \cite{lin_paqcs:_2015}, error correction \cite{fowler_quantum_2004}, and more recently, IBM QX architectures \cite{wille_mapping_2019,thomsen_evaluating_2019,zulehner_efficient_2019,dueck_optimization_2018}.

\co{previous work}
Up to now, the design of quantum circuits consists of manual work in elementary cases and for specific circuits. As the complexity of the algorithms increases, however, manual synthesis will no longer be feasible. When constructing a circuit from scratch, using only the set of elementary gates, even without considering nearest neighbor constraints, one is solving specific instances of the PSPACE-complete Minimum Generator Sequence problem \cite{jerrum_complexity_1985}, where the group consists of all unitary matrices and the elementary gate operations form the set of generators. Here one tries to find the shortest sequence of generators to map an input to a given output. A lot of work was done in this area  using boolean satisfiability \cite{grose_exact_2009}, template matching \cite{saeedi_synthesis_2011,maslov_quantum_2005} and methods for reversible circuits \cite{wille_considering_2014,alhagi_synthesis_2000} as all quantum gates perform unitary operations \cite{nielsen_quantum_2002}.
\co{insertion of swaps}
Other methods consider already designed circuits that do not comply with nearest neighbor constraints. In these approaches, SWAP gates, which swap the information of two adjacent qubits, are inserted into the circuit. The goal herein is to minimize the number of required SWAP gates to make the whole circuit compliant. Within this branch of research there are two approaches to the topic, global and local reordering. Global reordering determines the initial layout of the qubits such that there are as few SWAP gates as possible required in the remainder of the circuit. In order to elude the micromanagement that local reordering is concerned with, the global reordering problem is generally approximated with the NP-complete \cite{garey_computers_1979} problem of Optimal Linear Arrangement (OLA) on the interaction graph of the circuit with edge weights taking the Nearest Neighbor Cost \cite{kole_towards_2015}. Here the gate sequence is either disregarded \cite{shafaei_qubit_2014} or encoded in the weights \cite{kole_new_2018}.

\co{local reordering}
The local reordering problem allows for any change in the qubit order before each gate, resulting in a vast feasible region, even for small instances. The more general problem of SWAP minimization where qubits are placed on a coupling graph ($2$ qubits can share a gate if their corresponding nodes share an edge) is shown to be NP-Complete \cite{siraichi_qubit_2018} via a reduction from the NP-complete token swapping problem \cite{kawahara_time_2017,bonnet_complexity_2018}. The problem we consider, where the graph is a simple path, is widely believed to be NP-complete (as conjectured in \cite{hirata_efficient_2011}) but to the best of the authors' knowledge, no formal proof is given yet. 

Four research areas are distinguished in \cite{houte_mathematical_2020}, each corresponding to either local or global reordering and to either a single quantum computer or a network thereof. In \cite{houte_mathematical_2020}, the focus lies on networks of quantum computers, relating to the field of distributed quantum computing. This work proposes a new model for the local reordering problem on a single quantum computer. Many heuristics have been developed in this area of research, including receding horizon \cite{kole_heuristic_2016,shafaei_optimization_2013,wille_look-ahead_2016,hirata_efficient_2011}, greedy \cite{hirata_efficient_2011,alfailakawi_harmony-search_2016}, harmony search \cite{alfailakawi_harmony-search_2016} and OLA on parts of the circuit \cite{pedram_layout_2016}. Only a few works have dared to approach the problem with exact methods, all of which embody an explicit factorial scaling in the number of variables or processed nodes either through the use of the adjacent transposition graph \cite{matsuo_changing_2012}, exhaustive searches \cite{ding_exact_2019,hirata_efficient_2011} or explicit cost enumeration for each permutation \cite{wille_exact_2014}. The exact approaches have delivered small benchmark instances to compare the heuristics' results to. The size of these benchmark instances typically does not exceed circuits of about $5$ qubits and $16$ gates due to the vast scaling of the number of variables in the optimization model.

\co{our contribution}
In this work we will provide an exact Integer Linear Programming (ILP) formulation of the Nearest Neighborhor Compliance (NNC) problem that does not entail a factorial scaling in the number of qubits, by implicitly counting the number of required SWAP gates at each reordering step. The power of the commercial optimization solver CPLEX \cite{manual1987ibm} is used to optimally solve the problem for $123$ instances from the \textit{RevLib} library \cite{wille_revlib:_2008} and $8$ QFT circuits. The considered benchmark instances include the largest circuits to be exactly solved up to this point. They include the QFT for $10$ qubits and even a circuit with $18$ qubits. The evaluation of the bigger benchmark instances finally allows for heuristics to be compared to exact solutions on larger circuits.

\co{structure of letter}
The remainder of this paper is structured as follows. In Sec.~\ref{sec:background} we introduce basic concepts of quantum computing. In Sec.~\ref{sec:problemdefinition} the problem of NNC is formulated. Next, in Sec.~\ref{sec:model}, the proposed mathematical model is introduced. The results are presented and discussed in Sec.~\ref{sec:results}. Finally, conclusions are drawn in Sec.~\ref{sec:conclusion}.

\section{Background}
\label{sec:background}
\co{Here the preliminaries will be discussed.}
In this section we will first introduce some basic concepts of quantum computing. A more detailed explanation can be found in \cite{nielsen_quantum_2002}. Then, a description of decomposing multi-qubit gates is given.
\subsection{Building blocks of QC}
\label{sec:backbulidingblocks}
\co{Qubit states are represented by complex vectors, superposition basic notation.}
The quantum version of the classical basic unit of computation, the bit, is the quantum bit (qubit). The qubit has the special property that it does not have to take value 0 or 1, but it can be in a superposition of the computational basis states $\ket{0} \equiv [1,0]^T$ and $\ket{1} \equiv [0,1]^T$. The state of a qubit $\ket{\phi}$ is denoted by a vector in $\mathbb{C}^2$ where in general we write

\begin{equation}
\ket{\phi} = \alpha \ket{0} + \beta \ket{1},
\end{equation}
where $\alpha,\beta \in \mathbb{C}$. When information about the state's value is extracted by the means of measurement, the state collapses to a single value. If, for example, the measurement is done in the standard basis, one would obtain $\ket{0}$ with probability $|\alpha|^2$ and $\ket{1}$ with probability $|\beta|^2$. Necessarily, $|\alpha|^2 + |\beta|^2 = 1$. In $n$-qubit systems, the combined state is the tensor product of individual states, which is an element of $\mathbb{C}^{2^n}$. Calculations are done by executing quantum circuits, which consist of a set of qubits and a list of quantum gates. The initial qubit states are the input of the calculation. The gates operate, in order, on specified qubits. Afterwards, a measurement is performed on one or more of the qubits to determine the probabilistic outcome of the calculation. Quantum gates are inherently reversible and are denoted by linear operators in the form of invertible matrices. Their action on the combined qubit state is simply the matrix vector product.

Below we will introduce some of the most common quantum gates, starting with the controlled NOT (CNOT) gate, see Fig.~\ref{fig:CNOT}.

% For two-column wide figures use
\begin{figure*}
% Use the relevant command to insert your figure file.
% For example, with the graphicx package use
\centering
\begin{minipage}[c]{0.5\textwidth}
	\centering
	{\Qcircuit @C=1em @R=1.5em {
			\lstick{q_1} & \ctrl{1} & \qw & \rstick{q_1}\\
			\lstick{q_2} & \targ & \qw & \rstick{q_2 \oplus q_1}
	}}
\end{minipage}
% figure caption is below the figure
\caption{A CNOT gate. Qubit $q_1$ is the control qubit and $q_2$ is the target qubit.} 
\label{fig:CNOT}       % Give a unique label
\end{figure*}
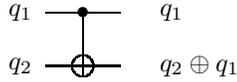

The controlled CNOT gate is one of the most commonly used gates. It is also used to construct the SWAP gate by placing three CNOT gates consecutively such as in Fig.~\ref{fig:SWAP}.

\begin{figure*}[h]
% Use the relevant command to insert your figure file.
% For example, with the graphicx package use
\noindent
\begin{minipage}[c]{0.5\textwidth}
	\centering
	{\Qcircuit @C=1em @R=1.5em {
			\lstick{q_1} & \ctrl{1} & \targ & \ctrl{1} & \qw & \rstick{q_2}\\
			\lstick{q_2} & \targ & \ctrl{-1} & \targ & \qw & \rstick{q_1}
	}}
\end{minipage}%
\centering
\hspace{1cm}
$\iff$
\hspace{1cm}
\begin{minipage}[c]{0.5\textwidth}
	\centering
	{\Qcircuit @C=1em @R=1.5em {
			\lstick{q_1} & \qswap & \qw & \rstick{q_2}\\
			\lstick{q_2} & \qswap\qwx & \qw & \rstick{q_1}
	}}
\end{minipage}
% figure caption is below the figure
\caption{A decomposed and composite SWAP gate. The operations are equivalent, the gates interchange the states of two qubits.} 
\label{fig:SWAP}       % Give a unique label
\end{figure*}
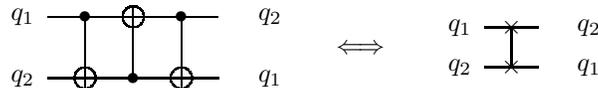
The SWAP gate will be the main tool used to overcome the physical constraints that limit quantum circuit design. We will, for the remainder of this text, not make a distinction between interchanging two qubits and interchanging the quantum states of two qubits.

\subsection{Decomposing multi-qubit gates}
\label{sec:backdecomposing}
\co{many circuits have multi controlled not gates for example, give some libraries and their gates. this is how we go back to 2-qubit gates only}
Many circuits make use of composite gates that resemble entire circuits themselves. They are often performed on more than two qubits at once, take for example the quantum Fourier Transform (QFT), which can act on any number of qubits. In order to describe what it means for a gate to act on adjacent qubits, it only makes sense to consider 2-qubit gates. To achieve this without losing the meaning of the circuit, we have to do a modification in the following two cases:

\begin{enumerate}
\item Gates that only act on a single qubit are ignored for the rest of this research. These gates are of no interest in this context.
\item Gates that act on more than two qubits are decomposed into 2-qubit gates. The fact that this is always possible can be found in \cite{nielsen_quantum_2002}.
\end{enumerate}

The second point can be implemented in a great variety of ways and doing this ``optimally" is outside the scope of this work. We therefore make two straightforward design choices: 1) We only consider circuits using multiple-control Toffoli gates, Peres gates and multiple-control Fredkin gates up to a certain size; 2) We always decompose a given circuit in the same way. There is clearly room for improvement here, but the search space we consider is large enough as it is. We ignore all single-qubit gates during the modification to a nearest neighbor compliant circuit. The normal Toffoli gate's decomposition, with two control qubits can be found in \cite{barenco_elementary_1995} in the section ``Three-Bit Networks''. In the same work, the decomposition of a 3-control Toffoli gate is shown in the section ``$n$-Bit Networks''. The decomposition of the 4-control Toffoli is the direct extension of the previous decompositions. The Peres gate is decomposed as in the circuit ``peres\_8.real'' from RevLib \cite{wille_revlib:_2008}. The Fredkin gate is decomposed as in the circuit ``fredkin\_5.real", also from RevLib. The two-qubit controlled Fredkin gate is decomposed into a controlled-NOT gate, a Toffoli gate and another controlled-NOT gate as shown by \cite{alhagi_synthesis_2000} in Fig 2.4c. Larger composed gates do not make an appearance in the circuits that are considered in this work.

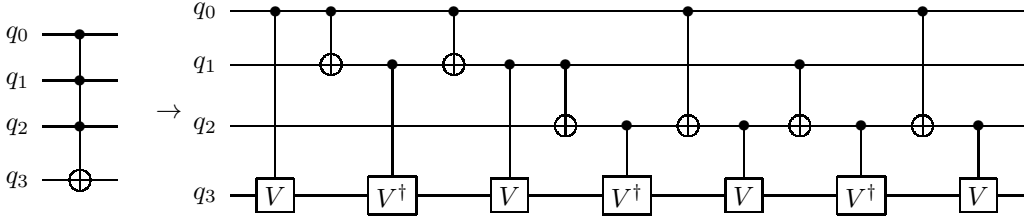
\begin{figure*}[h]
% Use the relevant command to insert your figure file.
% For example, with the graphicx package use
\noindent
\begin{minipage}[c]{0.5\textwidth}
	\centering
	{\Qcircuit @C=1em @R=1.5em {
			\lstick{q_0} & \ctrl{1} & \qw\\
			\lstick{q_1} & \ctrl{1} & \qw\\
			\lstick{q_2} & \ctrl{1} & \qw\\
			\lstick{q_3} & \targ    & \qw
	}}
\end{minipage}%
\centering
\hspace{0.4cm}
$\rightarrow$
\hspace{0.4cm}
\begin{minipage}[c]{0.2\textwidth}
	\centering
	{\Qcircuit @C=1em @R=1.5em {
			\lstick{q_0} & \ctrl{3} & \ctrl{1} & \qw           & \ctrl{1} & \qw      & \qw      & \qw           & \ctrl{2} & \qw      & \qw      & \qw           & \ctrl{2} & \qw  &    \qw \\
			\lstick{q_1} & \qw		& \targ    & \ctrl{2}      & \targ    & \ctrl{2} & \ctrl{1} & \qw           & \qw      & \qw      & \ctrl{1} & \qw           & \qw      & \qw  &    \qw \\
			\lstick{q_2} & \qw 		& \qw      & \qw           & \qw      & \qw      & \targ    & \ctrl{1}      & \targ    & \ctrl{1} & \targ    & \ctrl{1}      & \targ    & \ctrl{1}  &    \qw \\
			\lstick{q_3} & \gate{V} & \qw      & \gate{V^\dag} & \qw      & \gate{V} & \qw      & \gate{V^\dag} & \qw      & \gate{V} & \qw      & \gate{V^\dag} & \qw      & \gate{V} & \qw
	}}
\end{minipage}
% figure caption is below the figure
\caption{The decomposition of a Toffoli gate with $3$ control qubits and one target qubit into only 2-qubit gates. Here we have $V^4 = X$, where $X$ is the usual Pauli-X gate.} 
\label{fig:CCCNOT}       % Give a unique label
\end{figure*}

\section{Problem Definition}
\label{sec:problemdefinition}
\co{This section is about formalizing the problem}
In this section, some basic definitions will be introduced in order to formalize the NNC problem.

\co{Definitions lets go.}
For the NNC problem, the actual operation corresponding to a gate that is being used has no influence on the problem. Only the qubits on which the gate acts matter. Some definitions are introduced below. Denote the set $Q$ of $n$ qubits as the set of integers $Q = \{1,\hdots,n\}$. Since all the qubits have one physical location in a one dimensional array, the locations are numbered as $L = (1,\hdots,n)$ and are in a fixed order. To keep track of the location of each qubit before every gate, the notion of a qubit order will be introduced. % not happy with the usage of sets vs arrays vs vectors.

\begin{defn}
Let $\mathcal{S}_n$ be the permutation group and $[n]$ the vector $(1,\hdots,n)$. Then a \textit{qubit order} is a permutation denoted by the vector $\tau ([n])$ with $\tau \in \mathcal{S}_n$, which maps the qubits to locations. We call $\tau^t$ the qubit order before gate $t$.
\end{defn}

Now that the qubit orders are defined, one needs a way of altering such an order. This is done via the previously mentioned SWAP gates.

\begin{defn}
A \textit{SWAP gate} is an adjacent transposition $\tau \in \mathcal{S}_n$ that permutes a qubit order, $\tau \circ (q_1,\hdots,q_i,q_{i+1},\hdots,q_{n}) = (q_1,\hdots,q_{i+1},q_i,\hdots,q_{n})$, by interchanging the positions of two adjacent qubits.
\end{defn}

The number of SWAP gates that one minimally requires to ``move'' from one qubit order to another is inherently equal to the Kendall tau distance between the corresponding permutations.

\begin{defn}
Given two permutations $\tau_1,\tau_2 \in \mathcal{S}_n$ for some fixed $n$, the \textit{Kendall tau} distance between $\tau_1$ and $\tau_2$ is defined as
\begin{equation}
I(\tau_1,\tau_2) \equiv | \left\{ (i,j) \mid 1\leq i,j\leq n, \tau_1(i) < \tau_1(j), \tau_2(i) > \tau_2(j) \right\} |.
\end{equation}
\end{defn}

This metric counts the number of inversions between two orderings $\tau$ of items. It states that the number of adjacent transpositions required to sort the array is equal to the number of inversions in the array.
The nearest neighbor interaction constraints can only be formulated once the concept of quantum gates has been properly introduced in this setting.

\begin{defn}
Let $q_i,q_j \in Q$ be two qubits such that $i\neq j$. Let $g_{ij}$ be an unordered pair $g = \{q_i,q_j\}$. Then we say that $g_{ij}$ is a \textit{quantum gate}, or simply a \textit{gate}, that acts on qubits $q_i$ and $q_j$. When the specific qubits do not matter in the context, the subscripts may be omitted. When multiple gates are present and their order is important, this will be reflected with a superscript as $g^t$.
\end{defn}
Please note that this definition only allows for quantum gates that act on pairs of qubits. If a gate (in the more general sense) acts on more qubits, we assume it to be decomposed, whilst if it only works on one qubit, the gate can be ignored.

To describe an entire quantum circuit, multiple gates are needed and their order is important. To this end, a gate sequence is introduced.

\begin{defn}
Let $g^1,\hdots,g^{m}$ be $m$ gates. Let $G$ be the finite sequence of gates $G = (g^1,\hdots,g^{m})$, then we say $G$ is a \textit{gate sequence} of size $m$.
\end{defn}

We also assume the gate sequence to be given and fixed. Allowing changes in the gate order when some commutative rules are satisfied, as was done in \cite{matsuo_changing_2012,itoko_quantum_2019,hattori_quantum_2018}, is beyond the scope of this work.

Now we can introduce the concept of a quantum circuit more formally.

\begin{defn}
Let $Q$ be the set of qubits and $G$ be a gate sequence. Let $QC$ be a tuple of the set of qubits and the gate sequence $QC=(Q,G)$. Then we say that $QC$ is a \textit{quantum circuit}.
\end{defn}

At the core of the problem are the nearest neighbor (NN) constraints. Formalizing these requires a number of the above definitions. These constraints are what make the problem difficult.

\begin{defn}
Given are a gate $g_{ij}^t$ and a qubit order $\tau^t$ before that gate. We say that the gate \textit{complies with the NN constraints} if $|\tau(i)-\tau(j)|=1$, i.e., if the qubits on which the gate acts are adjacent in the qubit order. If, given a qubit order for each gate, all the gates in a quantum circuit's gate sequence comply with the NN constraints, we say that the quantum circuit complies with the NN constraints.
\end{defn}

Now that all these concepts have been formalized, we can continue with defining the problem of NNC.

\begin{problem}[doublelined]{Nearest Neighbor Compliance Problem}
\textbf{Input}: & A quantum circuit $QC = (Q,G)$ with $|Q|=n$ qubits and $|G|=m$ gates and an integer $k\in \mathbb{Z}_{\geq 0}$.\\
\textbf{Question}: & Do there exist qubit orders $\tau^t, t\in[m]$, one before each gate of $QC$, such that the sum of the Kendall tau distances between consecutive qubit orders satisfies $\sum_{t=1}^{m-1} I(\tau^t,\tau^{t+1}) \leq k$ and such that the quantum circuit complies with the NN constraints?
\end{problem}
In the minimization version of the problem, which we model in the next section, we seek to find the smallest integer $k$ such that Problem 1 is still answered affirmatively. Considering the problem in this way, we do not require the qubits to end up in the same qubit order as they started out in. We also do not allow for changes in the gate order and do not optimize over different ways of decomposing multi-qubit quantum gates. The objective function in the minimization problem simply counts the number of required SWAP gates.

Note that calculating the Kendall tau distance between two permutations can be naively done in $\mathcal{O}(n^2)$ time, following the steps of the bubble sort algorithm \cite{knuth_art_1974}. A faster computation of the distance, in $\mathcal{O}(n\sqrt{\log n})$ time, can be found in \cite{chan_counting_2010}.

We will however not be concerned with explicitly listing the Kendall tau distances for all $n!$ permutations. In order to avoid the listing, the metric should be implicitly calculated in the model. The objective function, variables and constraints that allow us to do so, will be introduced in the next section.

\section{Mathematical Model}
\label{sec:model}
\co{here we will present the mathematical model in terms of vars, constraints, objective}
In this section the proposed ILP formulation of the NNC minimization problem will be discussed in detail. First, the variables and constraints are presented and explained. Finally, the complete model is given, along with a linearization of the constraints.

Given a quantum circuit $QC = (Q,G)$, we introduce integer variables $x_i^t\in L=\{1,\hdots,n\}$ for the location of each qubit $i\in Q$ before each gate $g^t \in G$. Since the goal is to avoid the explicit $n!$ scaling in the number of variables and constraints, we make use of the Kendall tau metric to count the number of required SWAP gates when going from one qubit order $\tau^t$ to the next $\tau^{t+1}$. To accomplish this, keeping track of the pairwise order of the qubits is essential. We introduce binary variables to do precisely this,

\begin{equation}
y_{ij}^t  =
\begin{cases}
    1       & \quad \text{if location } x_i^t \text{ is before location } x_j^t \text{ in qubit order } \tau^t\\
    0  & \quad \text{else.}
\end{cases}
\end{equation}

The $y$-variables are only defined for $i<j$, so that every pair of qubits is only compared once. Keeping track of changes in the $y$-variables when moving from one qubit order to the next allows us to count the number of SWAP gates needed. The $x$- and $y$-variables are related through the following big-$M$ type constraints,

\begin{align}
x_i^t - x_j^t &\leq My_{ij}^t - 1 &\quad \forall i,j\in Q, i<j, t\in [m] \label{eq:bigM1}\\
x_j^t - x_i^t &\leq M(1-y_{ij}^t) - 1 &\quad \forall i,j\in Q, i<j, t\in [m] \label{eq:bigM2}
\end{align}
where $M$ is a big enough constant, $M=(n+1)$ being sufficient in this case. Note that these constraints also enforce two important features:
\begin{enumerate}
\item No two qubits can be at the same location at the same time.
\item The definition of the $y$ variables is enforced by the constraints.
\end{enumerate}
For fixed $i,j$ and $t$, one of the two constraints is always trivially satisfied due to the large value of $M$. The $-1$ term in the right-hand side even ensures that the location indices differ by at least one from each other. This allows us, later on, to relax the $x$ variables to be continuous without losing the property that feasible solutions have integer $x$ variables.

To make sure that the result also complies with the NN constraints, the following constraints need to be added:

\begin{align}
-1 \leq x_i^t - x_j^t &\leq 1 &\quad\forall g_{ij}^t \in G \label{eq:gate1}
\end{align}
For each gate that acts on qubits $q_i$ and $q_j$, the qubit order that is assumed just before the gate, it is required to have the qubits in adjacent locations.

The objective is to minimize the total number of absolute changes in the $y$ variables,

\begin{equation}\label{eq:objold}
\min \displaystyle\sum_{\substack{i,j\in Q\\i<j}}\displaystyle\sum_{t\in [m-1]} |y_{ij}^t - y_{ij}^{t+1}|.
\end{equation}
Notice that the objective function exactly computes the Kendall tau distance between every two consecutive qubit orders. Currently, the objective function is not linear. Extra binary variables $k_{ij}^t$ are introduced to linearize the objective function. These substitute $|y_{ij}^t - y_{ij}^{t+1}|$ in the objective function and are constrained in the following manner

\begin{align}
-k_{ij}^t \leq y_{ij}^t - y_{ij}^{t+1} &\leq k_{ij}^t &\quad \forall i,j\in Q, i<j, t\in [m-1] \label{eq:count1}
\end{align}
Now the $k$-variables can be substituted into Expression \eqref{eq:objold}, which, together with the constraints, result in the ILP model:

\begin{equation}
\begin{aligned}
&\min & & \displaystyle\sum_{\substack{i,j\in Q\\i<j}}\displaystyle\sum_{t\in [m-1]} k_{ij}^t \\
&\text{subject to } & & \eqref{eq:bigM1},\eqref{eq:bigM2},\eqref{eq:gate1},\eqref{eq:count1}\\
& & & x_i^t\in \{1,\hdots,n\} &\quad\forall i\in Q, t\in[m] \\
& & &y_{ij}^t\in \{0,1\} &\quad\forall i,j\in Q,i<j, t\in[m] \\
& & &k_{ij}^t\in \{0,1\} &\quad\forall i,j\in Q,i<j, t\in[m-1]
\end{aligned}
\end{equation}
Simply counting the number of variables in this formulation gives
\begin{equation}
\# \text{ variables} = n^2m-\frac{n^2-n}{2},
\end{equation}
and the number of constraints in the ILP is equal to
\begin{equation}
\# \text{ constraints} = 2(n^2-n)m - n^2 + n + 2m,
\end{equation}
which is polynomial in the number of qubits and gates.
In order to improve running times in practice, it helps to relax variables to take continuous values. We state the following about this relaxation:
\begin{prop}
Allowing the $x$- and $k$-variables to take continuous values does not change the optimal value.
\end{prop}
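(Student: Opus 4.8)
The plan is to treat the two relaxations separately, crucially exploiting that the $y$-variables are \emph{not} relaxed and therefore remain binary. One direction is immediate: dropping integrality requirements only enlarges the feasible region, so the optimal value of the relaxation is at most that of the ILP. It therefore suffices to prove the reverse inequality, and for this I would show that every optimal solution of the relaxation can be taken to have integral $x$- and $k$-variables, making it feasible for the original ILP at the same objective value.

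I would first dispose of the $k$-variables, which is routine. For any fixed binary $y$, constraints \eqref{eq:count1} force $k_{ij}^t \geq |y_{ij}^t - y_{ij}^{t+1}|$, and since the objective minimizes $\sum k_{ij}^t$ with positive coefficients, every optimal solution attains this with equality, $k_{ij}^t = |y_{ij}^t - y_{ij}^{t+1}|$. As the $y$-variables stay binary, this right-hand side lies in $\{0,1\}$, so the optimal $k$-values are automatically integral whether or not they are declared continuous.

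The crux is the $x$-variables, where the key is the counting argument foreshadowed just after the big-$M$ constraints. Fix a time step $t$. For any pair $i<j$, the value $y_{ij}^t \in \{0,1\}$ selects one of \eqref{eq:bigM1}, \eqref{eq:bigM2} whose $-1$ term yields $|x_i^t - x_j^t| \geq 1$, while the companion inequality is strictly slack because $M=n+1$ exceeds the spread $n-1$ of the $x$-values. Hence the $n$ numbers $x_1^t,\dots,x_n^t$ are pairwise separated by at least $1$. Retaining the box constraints $1 \leq x_i^t \leq n$ in the relaxation and sorting these values as $x_{(1)} < \dots < x_{(n)}$, the gaps of size at least $1$ give $x_{(n)} - x_{(1)} \geq n-1$, while the bounds give $x_{(n)} - x_{(1)} \leq n-1$. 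Equality throughout forces $x_{(r)} = r$ for every $r$, so the $x$-variables form a permutation of $\{1,\dots,n\}$ and are in particular integral.

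Combining the two parts, every optimal solution of the relaxation already has integral $x$- and $k$-variables (with $y$ binary by hypothesis), hence is feasible for the ILP and attains its objective, yielding the reverse inequality and thus equality of optimal values. The only genuinely delicate step is the $x$-variable counting argument; the rest is bookkeeping. The single point I would verify with care is that, under the retained bounds $x_i^t \in [1,n]$ and the choice $M=n+1$, the non-selected big-$M$ inequality is indeed slack for every pair, so that the selected one alone determines the unit separation—this is precisely where the magnitude of $M$ is used.
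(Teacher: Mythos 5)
Your proposal is correct and follows essentially the same route as the paper's proof: the big-$M$ constraints force pairwise separation of at least $1$ among the $n$ values $x_1^t,\dots,x_n^t$ confined to $[1,n]$, which pins them to a permutation of $\{1,\dots,n\}$, while minimization forces each $k_{ij}^t$ to equal $|y_{ij}^t-y_{ij}^{t+1}|\in\{0,1\}$. Your write-up is merely more explicit about the easy direction of the inequality and about why the non-selected big-$M$ constraint is slack.
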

\begin{proof}
The $x$-variables must take values that are pairwise separated from each other by at least $1$ due to constraints $\eqref{eq:bigM1},\eqref{eq:bigM2}$. There are $n$ variables that all have to take a value in a connected interval of length $n$, all spaced at least $1$ from each other. This can only be done if the $x$'s are all integer and all integer values are taken.
The $k$-variables are constrained by $\eqref{eq:count1}$. Since the $y$-variables are binary, their difference is also binary (or $-1$, in which case $k=0$ is allowed). Since we are minimizing over the $k$-variables, their value will always assume the smallest possible allowed value by the constraints, which is integer.
\end{proof}
Even though relaxing these variables does not impact the objective value of optimal solutions, it reduces the number of integer-restricted variables which improves the running time in practice. This stems from the underlying fact that it is an NP-complete problem to find an optimal solution to a general ILP, while doing so for a linear program (LP) is polynomially solvable with interior point methods.

\section{Experimental Results}
\label{sec:results}
\co{Here we provide tables filled with evaluations of benchmark instances}
In this section, the results of evaluating the proposed ILP model are presented. The time of finding the optimal solution in the proposed ILP model is compared to the time required by the previous best exact approaches. The attained objective value of a multitude of heuristic approaches is also compared with the solution that our model provides.
\subsection{Experimental setup}
\label{sec:resultssetup}
\co{What did we use to get the results?}
The mathematical model as described in the previous section has been implemented in Python and solved with the commercial solver CPLEX 12.7 through the Python API. All but the quantum Fourier transform instances, which were constructed following the circuit of \cite{nielsen_quantum_2002}, were obtained form the RevLib \cite{wille_revlib:_2008} website. The evaluations were conducted using up to $16$ threads of $2.4$ GHz each, working with 16 GB of RAM. All instances were solved to optimality.

The benchmark instances are subdivided over three tables, according to the number of qubits addressed. In the first column of each table, the name of the circuit is provided, and in the second column, $n$ denotes the number of qubits in the circuit. In the third column, $|G|$ denotes the number of 2-qubit gates present in the circuit after gate decomposition and the removal of single-qubit gates. The optimal value of the local reordering problem, i.e., the minimum number of needed SWAP gates to make the circuit nearest neighbor compliant, is provided in the fourth column. The column ``Time'' denotes the run time in seconds. The column entitled ``Time E'' denotes the running time of other exact methods, also in seconds. Exact running times with subscript $a$ are from \cite{wille_exact_2014}, subscript $b$ from \cite{matsuo_changing_2012}. Heuristic solution's objective values are presented in the last column, denoted by ``$\#$ SWAPS H''. Here the subscript $c$ indicates the results are from \cite{kole_new_2018}, subscript $d$ from \cite{shafaei_optimization_2013}, subscript $e$ from \cite{alfailakawi_line_2013}, subscript $f$ from \cite{kole_heuristic_2016} and subscript $g$ from \cite{wille_look-ahead_2016}. An asterisk as superscript indicates that for the other exact solution methods, either the objective value differs, or the number of gates differs or they both differ. For the heuristic results, the asterisk indicates that the number of gates differs or the objective value of the heuristic is lower than that of the proposed exact method. These anomalies are believed to find their roots in differing gate decomposition methods, resulting in slightly different instances.
\subsection{Results}
\co{which results did we get?}
The running time required to solve the instance is heavily dependent on three factors:
\begin{enumerate}
\item The number of qubits in the quantum circuit,
\item The number of gates in the quantum circuit,
\item The minimal number of required SWAP gates.
\end{enumerate}

The number of qubits and gates is expected to heavily influence the running time. The number of qubits is the term that influences the run time the most. This is due to the fact that the number of feasible solutions scales factorially in the number of qubits.
Surprisingly, the run time also scales quite badly with the number of required SWAP gates. During the Branch \& Bound tree search, the upper bound determined by CPLEX, which is the best feasible solution found up to that point, converges to the optimal value (or close to it) rather quickly. The best known lower bound, however, takes a long time to improve. When the number of required SWAP gates increases, the time needed to improve the lower bound all the way to the optimal value increases as well. This phenomenon is analyzed for two of the benchmark instances that require a lot of SWAP gates:
\begin{enumerate}
\item \textbf{mod8-10\_177} The search method found a feasible solution with an objective value within $10\%$ of the optimal value in $2.4\cdot 10^6$ iterations, found an optimal solution in $4.0\cdot 10^7$ iterations, and proved optimality by a matching lower bound after $1.7 \cdot 10^8$ iterations.
\item \textbf{decod24-enable\_126} The search method found a feasible solution with an objective value within $10\%$ of the optimal value in $5.3\cdot 10^6$ iterations, found an optimal solution in $1.0\cdot 10^7$ iterations, and proved optimality by a matching lower bound after $5.8 \cdot 10^7$ iterations.
\end{enumerate}
If a $10\%$ optimality gap would suffice, only less than $2\%$ of the total number of iterations would be needed in the first case, and $10\%$ in the second case. This observation indicates that running an incomplete Branch and Bound algorithm might be an interesting and easy-to-implement heuristic algorithm.

The $131$ evaluated benchmark instances are listed in the tables below. The improvement in computation time with respect to previous exact methods is significant. The results have been compared to other exact and heuristic methods. There is no standard set of benchmark instances, so not every benchmark instance has been evaluated with every method. Sometimes methods slightly differ in the problem they are solving by allowing alterations of the quantum circuit as a prepossessing step for example. The latter point may result in slightly differing optimal solutions, this is indicated with an asterisk in the tables. In the table, the running time of solving our integer linear programming model is compared to the running time of the other exact solution methods. 

Other exact solution methods can only solve the smaller circuits, making it impossible to compare their performance as the circuit size increases, apart from the binary statement that our method can indeed solve the instance. The results show exact solutions that are obtained for much larger circuits than previously held possible. The largest instance with respect to the number of qubits has as much as $18$ qubits. Furthermore, for the first time, NNC has been solved to optimality for circuits with more than $100$ quantum gates.

We also compare our algorithm to existing heuristic solution methods. These are much faster than our exact solution method, but do not guarantee optimal solutions to the problem. The comparison is most interesting in Tab.~\ref{tab:3}, where the considered quantum circuits require a higher number of SWAP gates to comply with the NN constraints. Here we see that the heuristic methods have an optimality gap of $42.1\%$ averaged over the comparable benchmark instances in Tab.~\ref{tab:3}.

\begin{table}
% table caption is above the table
\caption{Benchmark instances with three or four qubits.}
\begin{tabular}{lllllll}
\hline\noalign{\smallskip}
Benchmark & $n$ & $|G|$ & $\#$ SWAPS & Time & Time E & $\#$ SWAPS H\\
\noalign{\smallskip}\hline\noalign{\smallskip}
QFT\_QFT3 & $3$ & $3$ & $1$ & $0.02$ & - & - \\
peres\_10 & $3$ & $4$ & $1$ & $0.14$ & $0.1_{a}$ & -\\
peres\_8 & $3$ & $4$ & $1$ & $0.06$ & $0.1_{a}$ & -\\
toffoli\_2 & $3$ & $5$ & $1$ & $0.12$ & $0.2_{a}$ & -\\
toffoli\_1 & $3$ & $5$ & $1$ & $0.1$ & $0.1_{a}$ & -\\
peres\_9 & $3$ & $6$ & $1$ & $0.02$ & $2463_{a}$ & -\\
fredkin\_7 & $3$ & $7$ & $1$ & $0.16$ & - & -\\
ex-1\_166 & $3$ & $7$ & $2$ & $0.08$ & $0.1_{a}$ & -\\
fredkin\_5 & $3$ & $7$ & $1$ & $0.15$ & $0.1_{a},0.1_{b}^*$ & -\\
ham3\_103 & $3$ & $8$ & $2$ & $0.04$ & - & -\\
miller\_12 & $3$ & $8$ & $2$ & $0.14$ & $745.6_{a},0.1_{b}$ & -\\
ham3\_102 & $3$ & $9$ & $1$ & $0.05$ & $0.1_{a}^*$ & -\\
3\_17\_15 & $3$ & $9$ & $2$ & $0.04$ & $630.2_{a},0.1_{b}^*$ & -\\
3\_17\_13 & $3$ & $13$ & $3$ & $0.12$ & $0.1_{a}^*$ & $4_{c}^*,4_{d},3_{e},6_{g}$\\
3\_17\_14 & $3$ & $13$ & $3$ & $0.15$ & $0.1_{a}^*$ & -\\
fredkin\_6 & $3$ & $15$ & $3$ & $0.06$ & $4.6_{a}$ & -\\
miller\_11 & $3$ & $17$ & $4$ & $0.15$ & $0.1_{a}^*$ & -\\
QFT\_QFT4 & $4$ & $6$ & $3$ & $0.17$ & - & -\\
toffoli\_double\_3 & $4$ & $7$ & $1$ & $0.11$ & $0.9_{a},0.1_{b}^*$ & -\\
rd32-v1\_69 & $4$ & $8$ & $2$ & $0.16$ & $0.1_{a}$ & -\\
decod24-v1\_42 & $4$ & $8$ & $2$ & $0.12$ & $7.7_{a},0.1_{b}^*$ & -\\
rd32-v0\_67 & $4$ & $8$ & $2$ & $0.07$ & $1.6_{a}$ & $2_{c},2_{d}$\\
decod24-v2\_44 & $4$ & $8$ & $3$ & $0.07$ & $0.1_{b}^*$ & -\\
decod24-v0\_40 & $4$ & $8$ & $3$ & $0.06$ & $0.1_{b}^*$ & -\\
decod24-v3\_46 & $4$ & $9$ & $3$ & $0.09$ & $0.1_{a},0.1_{b}^*$ & $3_{c},3_{d}$\\
toffoli\_double\_4 & $4$ & $10$ & $2$ & $0.07$ & $200_{a}^2$ & -\\
rd32-v1\_68 & $4$ & $12$ & $3$ & $0.24$ & $0.4_{a}^*$ & -\\
rd32-v0\_66 & $4$ & $12$ & $0$ & $0.09$ & $0.4_{a}^*$ & -\\
decod24-v0\_39 & $4$ & $15$ & $5$ & $0.53$ & $0.5_{a}$ & -\\
decod24-v2\_43 & $4$ & $16$ & $5$ & $0.23$ & $0.1_{a}^*$ & -\\
decod24-v0\_38 & $4$ & $17$ & $4$ & $0.57$ & $19.2_{a}$ & -\\
decod24-v1\_41 & $4$ & $21$ & $7$ & $0.5$ & - & -\\
hwb4\_52 & $4$ & $23$ & $8$ & $0.97$ & - & $9_{c},10_{d},9_{e},9_{f}$\\
aj-e11\_168 & $4$ & $29$ & $12$ & $5.36$ & - & -\\
4\_49\_17 & $4$ & $30$ & $12$ & $6.1$ & - & $12_{c}^*,12_{d},16_{e}$\\
decod24-v3\_45 & $4$ & $32$ & $13$ & $6.25$ & - & -\\
mod10\_176 & $4$ & $42$ & $15$ & $7.94$ & - & -\\
aj-e11\_165 & $4$ & $44$ & $18$ & $9.36$ & - & $36_{d},33_{g}^*$\\
mod10\_171 & $4$ & $57$ & $24$ & $27.18$ & - & -\\
4\_49\_16 & $4$ & $59$ & $22$ & $24.23$ & - & -\\
mini-alu\_167 & $4$ & $62$ & $27$ & $23.7$ & - & -\\
hwb4\_50 & $4$ & $63$ & $23$ & $17.61$ & - & -\\
hwb4\_49 & $4$ & $65$ & $23$ & $21.64$ & - & -\\
hwb4\_51 & $4$ & $75$ & $28$ & $75.09$ & - & -\\
\noalign{\smallskip}\hline
\end{tabular}
\end{table}

\newpage

\begin{table}
% table caption is above the table
\caption{Benchmark instances with five qubits}
\begin{tabular}{lllllll}
\hline\noalign{\smallskip}
Benchmark & $n$ & $|G|$ & $\#$ SWAPS & Time & Time E & $\#$ SWAPS H\\
\noalign{\smallskip}\hline\noalign{\smallskip}
4mod5-v1\_25 & $5$ & $7$ & $1$ & $0.26$ & $11705.3_{a}$ & -\\
4gt11\_84 & $5$ & $7$ & $1$ & $0.06$ & $16.6_{a}$ & $1_{c},1_{d},1_{e}$\\
4gt11-v1\_85 & $5$ & $7$ & $1$ & $0.09$ & - & -\\
4mod5-v0\_20 & $5$ & $8$ & $2$ & $0.08$ & $45.5_{a}$ & -\\
4mod5-v1\_22 & $5$ & $9$ & $1$ & $0.08$ & $548.8_{a}^*$ & -\\
QFT\_QFT5 & $5$ & $10$ & $6$ & $0.41$ & $1.6_{a}$ & $7_{c},6_{d}$\\
mod5d1\_63 & $5$ & $11$ & $2$ & $0.12$ & - & -\\
4mod5-v0\_19 & $5$ & $12$ & $3$ & $0.84$ & $55.3_{a}^*$ & -\\
4gt11\_83 & $5$ & $12$ & $3$ & $0.15$ & $9_{a}^*$ & -\\
4mod5-v1\_24 & $5$ & $12$ & $3$ & $0.28$ & - & -\\
mod5mils\_65 & $5$ & $12$ & $4$ & $0.26$ & - & -\\
mod5mils\_71 & $5$ & $12$ & $2$ & $0.15$ & - & -\\
alu-v2\_33 & $5$ & $13$ & $4$ & $0.45$ & - & -\\
alu-v1\_29 & $5$ & $13$ & $4$ & $0.61$ & - & -\\
alu-v0\_27 & $5$ & $13$ & $4$ & $0.48$ & - & -\\
mod5d2\_70 & $5$ & $14$ & $5$ & $0.43$ & - & -\\
alu-v3\_35 & $5$ & $14$ & $5$ & $0.38$ & - & -\\
alu-v4\_37 & $5$ & $14$ & $5$ & $0.37$ & - & -\\
alu-v1\_28 & $5$ & $14$ & $4$ & $0.26$ & - & -\\
4gt13-v1\_93 & $5$ & $15$ & $5$ & $0.69$ & $489.3_{a}^*$ & $7_{c}^*,6_{d},4_{e}^*$\\
4gt13\_92 & $5$ & $15$ & $6$ & $0.53$ & - & -\\
4gt11\_82 & $5$ & $16$ & $6$ & $0.89$ & - & -\\
4mod5-v0\_21 & $5$ & $17$ & $8$ & $2.84$ & - & -\\
rd32\_272 & $5$ & $18$ & $7$ & $0.94$ & - & -\\
alu-v3\_34 & $5$ & $18$ & $4$ & $0.4$ & - & -\\
mod5d2\_64 & $5$ & $19$ & $6$ & $1.81$ & - & -\\
alu-v0\_26 & $5$ & $21$ & $8$ & $3.56$ & - & -\\
4gt5\_75 & $5$ & $21$ & $6$ & $1.1$ & - & $9_{c}^*,12_{d}$\\
4mod5-v0\_18 & $5$ & $23$ & $8$ & $3.35$ & - & -\\
4mod5-v1\_23 & $5$ & $24$ & $9$ & $5.06$ & - & $9_{c},9_{d},15_{e}$\\
one-two-three-v2\_100 & $5$ & $24$ & $7$ & $5.37$ & - & -\\
one-two-three-v3\_101 & $5$ & $24$ & $7$ & $2.96$ & - & -\\
rd32\_271 & $5$ & $26$ & $11$ & $7.37$ & - & -\\
4gt5\_77 & $5$ & $28$ & $10$ & $6.2$ & - & -\\
4gt5\_76 & $5$ & $29$ & $10$ & $5.45$ & - & -\\
alu-v4\_36 & $5$ & $30$ & $9$ & $6.34$ & - & $15_{c}^*,18_{d},17_{e}$\\
4gt13\_91 & $5$ & $30$ & $8$ & $4.46$ & - & -\\
4gt13\_90 & $5$ & $34$ & $12$ & $6.77$ & - & -\\
4gt10-v1\_81 & $5$ & $34$ & $13$ & $12.38$ & - & $18_{c}^*,20_{d},16_{e},24_{g}^*$\\
one-two-three-v1\_99 & $5$ & $36$ & $15$ & $17.27$ & - & -\\
4gt4-v0\_80 & $5$ & $36$ & $19$ & $43.45$ & - & $34_{d},33_{f}$\\
4mod7-v0\_94 & $5$ & $38$ & $12$ & $12.83$ & - & -\\
alu-v2\_32 & $5$ & $38$ & $16$ & $22.05$ & - & -\\
4mod7-v0\_95 & $5$ & $38$ & $14$ & $14.59$ & - & $19_{c}^*,21_{d},22_{e}$\\
4mod7-v1\_96 & $5$ & $38$ & $14$ & $13.49$ & - & -\\
\noalign{\smallskip}\hline
\end{tabular}
\end{table}

\begin{table}
% table caption is above the table
\caption{Benchmark instances with five qubits and more than 40 gates}
\begin{tabular}{lllllll}
\hline\noalign{\smallskip}
one-two-three-v0\_98 & $5$ & $40$ & $15$ & $15.67$ & - & -\\
4gt12-v0\_88 & $5$ & $41$ & $20$ & $34.01$ & - & -\\
4gt12-v1\_89 & $5$ & $44$ & $22$ & $52.36$ & - & $35_{d},26_{e},32_{f}$\\
sf\_275 & $5$ & $46$ & $18$ & $21.42$ & - & -\\
4gt4-v0\_79 & $5$ & $49$ & $22$ & $80.16$ & - & -\\
4gt4-v0\_78 & $5$ & $53$ & $26$ & $167.03$ & - & -\\
4gt4-v0\_72 & $5$ & $53$ & $24$ & $49.7$ & - & -\\
4gt12-v0\_87 & $5$ & $54$ & $22$ & $45.88$ & - & -\\
4gt4-v1\_74 & $5$ & $57$ & $29$ & $84.87$ & - & -\\
4gt12-v0\_86 & $5$ & $58$ & $26$ & $108.35$ & - & -\\
mod8-10\_178 & $5$ & $68$ & $37$ & $389.47$ & - & -\\
one-two-three-v0\_97 & $5$ & $71$ & $32$ & $76.8$ & - & -\\
4gt4-v0\_73 & $5$ & $89$ & $40$ & $699.65$ & - & -\\
mod8-10\_177 & $5$ & $93$ & $48$ & $3650.26$ & - & $72_{d}$\\
alu-v2\_31 & $5$ & $100$ & $49$ & $2906.35$ & - & -\\
hwb5\_55 & $5$ & $101$ & $48$ & $2264.0$ & - & $59_{c},63_{d},60_{e},66_{g}$\\
rd32\_273 & $5$ & $104$ & $50$ & $4631.7$ & - & -\\
alu-v2\_30 & $5$ & $112$ & $55$ & $13558.87$\\
\noalign{\smallskip}\hline
\end{tabular}
\end{table}

\begin{table}
% table caption is above the table
\caption{Benchmark instances with six or more qubits.} \label{tab:3}
\begin{tabular}{lllllll}
\hline\noalign{\smallskip}
Benchmark & $n$ & $|G|$ & $\#$ SWAPS & Time & Time E & $\#$ SWAPS H\\
\noalign{\smallskip}\hline\noalign{\smallskip}
graycode6\_47 & $6$ & $5$ & $0$ & $0.02$ & - & -\\
graycode6\_48 & $6$ & $5$ & $0$ & $0.02$ & - & -\\
QFT\_QFT6 & $6$ & $15$ & $11$ & $7.43$ & - & $11_{c},12_{d}$\\
decod24-enable\_124 & $6$ & $21$ & $5$ & $1.86$ & - & -\\
decod24-enable\_125 & $6$ & $21$ & $5$ & $1.83$ & - & -\\
decod24-bdd\_294 & $6$ & $24$ & $7$ & $9.37$ & - & -\\
mod5adder\_129 & $6$ & $71$ & $34$ & $534.38$ & - & -\\
mod5adder\_128 & $6$ & $77$ & $36$ & $1103.51$ & - & $45_{c}^*,51_{d},46_{g}^*$\\
decod24-enable\_126 & $6$ & $86$ & $37$ & $1954.28$ & - & -\\
xor5\_254 & $7$ & $5$ & $3$ & $0.61$ & - & -\\
ex1\_226 & $7$ & $5$ & $3$ & $0.25$ & - & -\\
QFT\_QFT7 & $7$ & $21$ & $16$ & $28.26$ & - & $28_{c},26_{d},18_{g}$\\
4mod5-bdd\_287 & $7$ & $23$ & $7$ & $4.3$ & - & -\\
alu-bdd\_288 & $7$ & $28$ & $8$ & $20.65$ & - & -\\
ham7\_106 & $7$ & $49$ & $28$ & $495.43$ & - & -\\
ham7\_105 & $7$ & $65$ & $34$ & $1613.33$ & - & -\\
ham7\_104 & $7$ & $83$ & $42$ & $3238.82$ & - & $56_{c}^*$\\
QFT\_QFT8 & $8$ & $28$ & $23$ & $334.6$ & - & $32_{c},33_{d},31_{g}$\\
rd53\_139 & $8$ & $36$ & $11$ & $76.29$ & - & -\\
rd53\_138 & $8$ & $44$ & $11$ & $100.86$ & - & -\\
rd53\_137 & $8$ & $66$ & $35$ & $6271.11$ & - & -\\
QFT\_QFT9 & $9$ & $36$ & $30$ & $1482.53$ & - & $52_{c},54_{d},49_{g}$\\
QFT\_QFT10 & $10$ & $45$ & $39$ & $39594.99$ & - & $64_{g}$\\
mini\_alu\_305 & $10$ & $57$ & $23$ & $1711.75$ & - & -\\
sys6-v0\_144 & $10$ & $62$ & $19$ & $887.71$ & - & -\\
rd73\_141 & $10$ & $64$ & $21$ & $845.05$ & - & -\\
parity\_247 & $18$ & $16$ & $14$ & $5762.29$ & - & -\\
\noalign{\smallskip}\hline
\end{tabular}
\end{table}
\vfill
\newpage

\hfill
\hfill
\hfill
\hfill
\section{Conclusion}
\label{sec:conclusion}
\co{What did we do? What can we say about the results? What were the limitations? What are interesting next research topics?}
In this paper we consider the local reordering scheme for nearest neighbor architectures of quantum circuits. We propose a new mathematical model that counts the number of required SWAP gates implicitly, by using specific properties of the constraints. The implicit counting improves upon previous exact approaches in which costs were explicitly determined for each permutation, leading to a factorial scaling of the model size, and therefore, a high running time. The presented innovations result in a great improvement in the model size, such that the resulting ILP only contains $\mathcal{O}(n^2m)$ variables and constraints.

The benchmark instances with available exact solutions known in the literature were no larger than circuits with five qubits and no more than twenty gates, due to the excessive running times. The proposed method can handle quantum circuits with five qubits and $112$ gates or up to eighteen qubits and sixteen gates. In total $131$ benchmark instances are evaluated, most of which have not been solved to optimality.

Because the implicit counting is based on counting inversions in permutations, the formulation is not easily translated to the popular higher dimensional cases where qubits are placed on a $2$D or $3$D grid. To the authors' best knowledge there is no known polynomial time algorithm that, in 2- or 3-dimensional grids, solves the subproblem of calculating the minimum number of required SWAP gates when transforming one qubit order into another. Such a method could have great impact on exact solution methods in the higher-dimensional setting.

Practical experience with the Branch \& Bound tree search indicates that finding a (near) optimal feasible solution does not consume the most computation time. This means that solving the ILP heuristically, with a restriction in running time or iteration count for example, could make for a good heuristic solution method.

\bibliographystyle{spmpsci}      % mathematics and physical sciences
%\bibliographystyle{spphys}       % APS-like style for physics
%\bibliography{ThesisLiterature}   % name your BibTeX data base
%\input{ReferenceList.bbl}

% Non-BibTeX users please use
%\begin{thebibliography}{}
%
% and use \bibitem to create references. Consult the Instructions
% for authors for reference list style.
%
%\bibitem{RefJ}
% Format for Journal Reference
%Author, Article title, Journal, Volume, page numbers (year)
% Format for books
%\bibitem{RefB}
%Author, Book title, page numbers. Publisher, place (year)
% etc
%\end{thebibliography}

\end{document}